



\documentclass[letterpaper, 10 pt, conference]{IEEEtran}
\makeatletter
\def\ps@headings{%
\def\@oddhead{\mbox{}\scriptsize\rightmark \hfil \thepage}%
\def\@evenhead{\scriptsize\thepage \hfil\leftmark\mbox{}}%
\def\@oddfoot{}%
\def\@evenfoot{}}
\makeatother
\pagestyle{headings}
\usepackage[T1]{fontenc}
\usepackage[latin1]{inputenc}
\usepackage{float}
\usepackage{amsmath}
\usepackage[noadjust]{cite}
\usepackage{amssymb}
\usepackage{enumerate}
\usepackage{amsthm}
\usepackage{graphics,epsfig}
\usepackage{subfigure,epsf,pstricks}

\floatstyle{ruled}
\newfloat{algorithm}{tbp}{loa}
\floatname{algorithm}{Algorithm}

\usepackage{algorithmic}
\algsetup{linenodelimiter= }

\title{The Limits of Error Correction with $l_p$ Decoding}

\author{Meng Wang \ \ \ \ \ Weiyu Xu \ \ \ \ \ Ao Tang
\\
School of ECE, Cornell University, Ithaca, NY 14853, USA }

\usepackage{graphicx}
\usepackage{epstopdf}
\newtheorem{theorem}{Theorem}
\newtheorem{lemma}{Lemma}
\newtheorem{cor}{Corollary}

\newtheorem{prop}{Proposition}

\begin{document}

\maketitle \thispagestyle{empty} \pagestyle{empty}

\begin{abstract}
An unknown vector $f$ in $\mathbf{R}^n$ can be recovered from
corrupted measurements $y=Af+e$ where $A^{m \times n}$($m \geq n$)
is the coding matrix if the unknown error vector $e$ is sparse. We
investigate the relationship of the fraction of errors and the
recovering ability of $l_p$-minimization ($0<p \leq 1$) which
returns a vector $x$ minimizing the ``$l_p$-norm'' of $y-Ax$. We
give sharp thresholds of the fraction of errors that determine the
successful recovery of $f$. If $e$ is an arbitrary unknown vector,
the threshold strictly decreases from 0.5 to 0.239 as $p$ increases
from 0 to 1. If $e$ has fixed support and fixed signs on the
support, the threshold is $\frac{2}{3}$ for all $p$ in $(0,1)$,
while the threshold is 1 for $l_1$-minimization.
\end{abstract}

\section{Introduction} \label{sec:intro}
We consider recovering a vector $f$ in $\mathbf{R}^n$ from corrupted
measurements $y=Af+e$, where $A^{m \times n}$($m \geq n$) is the
coding matrix and $e$ is an arbitrary and unknown vector of errors.
Obviously, if the fraction of the corrupted entries is too large,
there is no hope of recovering $f$ from $Af+e$. However, if the
fraction of corrupted measurements is small enough, one can actually
recover $f$ from $y=Af+e$. As the sparsity of $e$ is represented by
the $l_0$ norm, $\|e\|_0:=|\{i:e_i \neq 0\}|$, one natural way is to
find a vector $x$ such that the number of terms where $y$ and $Ax$
differ is minimized. Mathematically, we solve the following
$l_0$-minimization problem:
\begin{equation}\label{eqn:l0}
\min \limits_{x \in \mathbf{R}^n} \|y-Ax \|_0.
\end{equation}
However, (\ref{eqn:l0}) is combinatorial and computationally
intractable, and one commonly used approach is to solve a closely
related $l_1$-minimization problem:

\begin{equation}\label{eqn:l1}
\min \limits_{x \in \mathbf{R}^n} \|y-Ax \|_1 
\end{equation}
where $\|x\|_1:= \sum_i |x_i|$. (\ref{eqn:l1}) can be recast as a
linear program, thus can be solved efficiently. Conditions under
which (\ref{eqn:l1}) can successfully recover $f$ have been
extensively studied in the literature of compressed sensing
(\cite{DoT05, Don06, CaT05,CaT06,SXH08,WM08}). For example,
\cite{CaT05} gives a sufficient condition known as the Restricted
Isometry Property (RIP).

Recently, there has been great research interest in recovering $f$
by $l_p$-minimization for $p<1$
(\cite{Chartrand07,Chartrand072,SCY08,DG09,FL09}) as follows,
\begin{equation}\label{eqn:lp}
\min \limits_{x \in \mathbf{R}^n} \|y-Ax \|_p^p.
\end{equation}
Recall that $ \|x\|_p^p:=\left(\sum_i |x_i|^p\right)$ for $p>0$. We say $f$ can be recovered by $l_p$-minimization if and only if it
is the unique solution to (\ref{eqn:lp}). 
%
%
%
Then the question is what is the relationship between the sparsity
of the error vector and the successful recovery with
$l_p$-minimization?
(\ref{eqn:lp}) is non-convex, and thus it is generally hard to
compute the global minimum. However, \cite{Chartrand07} shows
numerically that we can recover $f$ by finding a local minimum of
(\ref{eqn:lp}), and $l_p$-minimization outperforms
$l_1$-minimization in terms of the sparsity restriction for $e$.
\cite{SCY08} extends RIP to $l_p$-minimization and analyzes the
ability of $l_p$-minimization to recover signals from noisy
measurements. \cite{FL09} also provides a condition for the success
recovery via $l_p$-minimization, which can be generalized to $L_1$
case. Both conditions are sufficient but not necessary, and
thus are too restrictive in general. 


Let $e \in \mathbf{R}^m$ be an arbitrary and unknown vector of
errors on support $T=\{i: e_i \neq 0 \}$. We say $e$ is $\rho
m$-sparse if $|T| \leq \rho m$ for some $\rho<1$ where $|T|$ is the
cardinality of set $T$. Our main contribution is a sharp threshold
$\rho^*(p)$ for all $p \leq 1$ such that for $\rho< \rho^*(p)$, if
$m\geq Cn$ for some constant $C$ and the entries of $A$ are i.i.d.
Gaussian, then $l_p$-minimization can recover $f$
with overwhelming probability.  
We provide two thresholds: one ($\rho^*$) is for the case when $e$ is an arbitrary unknown vector, and the other ($\rho_w^*$) assumes that $e$ has fixed support and fixed signs. 
In the latter case, the condition of successful recovery with
$l_1$-minimization from any possible error vector is the same, while
the condition of successful recovery with $l_p$-minimization ($p<1$)
from different error vectors differs. Using worst-case performance
as criterion, we prove that though $l_p$ outperforms $l_1$ in the
former case, it is not comparable to $l_1$ in the latter case. Both
bounds $\rho^*$ and $\rho_w^*$ are tight in the sense that once the
fraction of errors exceeds $\rho^*$ (or $\rho^*_w$),
$l_p$-minimization can be made to fail with overwhelming
probability. Our technique stems from \cite{DMT07}, which only
focuses on $l_1$-minimization and the case that $e$ is arbitrary. 
\section{Recovery From Arbitrary Error vector}\label{sec:sbd}
In this section, we shall give a function $\rho^*(p)$ such that for
a given $p$, for any $\rho<\rho^*(p)$, when the entries of $A$ are
i.i.d. Gaussian, the $l_p$-minimization can recover $f$ with
overwhelming probability as long as the error $e$ is $\rho
m$-sparse.

The following theorem gives an equivalent condition for the success
of $l_p$ minimization (~\cite{Chartrand07,Chartrand072}). 

\begin{theorem}[~\cite{Chartrand07,Chartrand072}]\label{thm:slp}
$f$ is the unique solution to $l_p$ minimization problem $(0 <p \leq
1)$ for every $f$ and for every $\rho m$-sparse $e$ if and only if
\begin{equation}\label{eqn:slp}
\sum \limits_{i \in T} |(Az)_i|^p < \sum \limits_{i \in T^c} |(Az)_i|^p
\end{equation}
for every $z \in \mathbf{R}^n$, and every support $T$ with $|T| \leq
\rho m$.
\end{theorem}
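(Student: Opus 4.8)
The plan is to reformulate the uniqueness of $f$ as a minimizer of (\ref{eqn:lp}) directly in terms of a difference vector, and then handle the two implications separately. Since $y=Af+e$, for any candidate $x$ I would set $z=f-x$, so that $y-Ax=e+Az$ and $y-Af=e$. As $x$ ranges over $\mathbf{R}^n\setminus\{f\}$, the vector $z$ ranges over $\mathbf{R}^n\setminus\{0\}$ independently of $f$, so $f$ is the unique solution of (\ref{eqn:lp}) for every $f$ if and only if $\|e+Az\|_p^p>\|e\|_p^p$ for every $z\neq 0$. Writing $T=\mathrm{supp}(e)$ and splitting the sum over $T$ and $T^c$, this is equivalent to $\sum_{i\in T^c}|(Az)_i|^p>\sum_{i\in T}\bigl(|e_i|^p-|e_i+(Az)_i|^p\bigr)$. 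This algebraic reformulation is where most of the work sits.

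For the ``if'' direction, I would invoke the subadditivity inequality $|a+b|^p\leq|a|^p+|b|^p$, valid for $0<p\leq 1$, applied as $|e_i|^p\leq|e_i+(Az)_i|^p+|(Az)_i|^p$, to obtain the termwise bound $|e_i|^p-|e_i+(Az)_i|^p\leq|(Az)_i|^p$ on $T$. Summing over $i\in T$ and chaining with the hypothesis $\sum_{i\in T}|(Az)_i|^p<\sum_{i\in T^c}|(Az)_i|^p$ gives $\sum_{i\in T}\bigl(|e_i|^p-|e_i+(Az)_i|^p\bigr)\leq\sum_{i\in T}|(Az)_i|^p<\sum_{i\in T^c}|(Az)_i|^p$, which is exactly the strict inequality required, so $f$ is the unique minimizer for every $\rho m$-sparse $e$.

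For the ``only if'' direction, I would argue by contraposition. If (\ref{eqn:slp}) fails, there exist $z\neq 0$ and $T$ with $|T|\leq\rho m$ such that $\sum_{i\in T}|(Az)_i|^p\geq\sum_{i\in T^c}|(Az)_i|^p$. I would then exhibit an explicit counterexample by choosing $e$ supported on $T$ with $e_i=-(Az)_i$, which is $\rho m$-sparse, together with an arbitrary $f$. For this $e$, the competitor $x=f-z\neq f$ satisfies $\|e+Az\|_p^p=\sum_{i\in T^c}|(Az)_i|^p\leq\sum_{i\in T}|(Az)_i|^p=\|e\|_p^p$, so $f$ is not the unique solution of (\ref{eqn:lp}), completing the contrapositive.

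The main obstacle I anticipate is bookkeeping the strict versus non-strict inequalities correctly across the equivalence, and matching the quantifier over all supports $T$ with $|T|\leq\rho m$ to the support of the constructed error. I would also record that the condition silently forces $A$ to be injective: any nonzero $z$ in the kernel of $A$ makes both sides of (\ref{eqn:slp}) vanish, so the strict inequality cannot hold, which is consistent with the fact that recovery is hopeless when $A$ is not full column rank.
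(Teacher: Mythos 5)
Your proof is correct: the reduction to $\|e+Az\|_p^p>\|e\|_p^p$ via $z=f-x$, the subadditivity bound $|e_i|^p-|e_i+(Az)_i|^p\leq|(Az)_i|^p$ for the sufficiency, and the explicit counterexample $e_i=-(Az)_i$ on $T$ for the necessity are exactly the standard argument. The paper itself states Theorem \ref{thm:slp} without proof (citing Chartrand), but your argument is the same technique the paper deploys in its proof of Theorem \ref{thm:wb}, and your observation that the quantifier must be read over nonzero $z$ (else the kernel of $A$ makes both sides vanish) is a fair and worthwhile clarification of the statement as printed.
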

%

One important property is that if the condition (\ref{eqn:slp}) is
satisfied for some $0<p \leq 1$, then it is also satisfied for all
$0 < q \leq p$ (\cite{DG09}). Now we define the threshold of
successful recovery $\rho^*$ as a function of $p$.
\begin{lemma}
Let $X_1$, $X_2$,...,$X_m$ be i.i.d $N(0,1)$ random variables and
let $Y_1$, $Y_2$,...,$Y_m$ be the sorted ordering (in non-increasing
order) of $|X_1|^p$, $|X_2|^p$,...,$|X_m|^p$ for some $p \in (0,1]$.
For a $\rho
>0$, define $S_\rho$ as $\sum \limits_{i=1}^{\lceil \rho m \rceil}
Y_i$. Let $S$ denote $E[S_1]$, the expected value of $S_1$. Then
there exists a constant $\rho^*(p)$ such that $\lim \limits_{m
\rightarrow \infty} \frac{E[S_{\rho^*}]}{S}=\frac{1}{2}$.
\end{lemma}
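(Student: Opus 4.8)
The plan is to reduce the statement to an application of the intermediate value theorem to the limiting ratio, viewed as a function of $\rho$. First I would set $Z:=|X|^p$, a nonnegative random variable whose law is the push-forward of the half-normal under $t\mapsto t^p$; since $X$ is Gaussian, every moment of $Z$ is finite. Let $F$ denote the CDF of $Z$ and $F^{-1}$ its quantile function. Because $S_\rho$ is the sum of the $\lceil \rho m\rceil$ largest among $m$ i.i.d.\ copies of $Z$, while $S_1=\sum_{i=1}^m Z_i$ satisfies $S=E[S_1]=m\,E[Z]$ exactly, the whole problem is to understand $E[S_\rho]/m$ and to compare it to $E[Z]$.

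The technical core is the claim that, for each fixed $\rho\in(0,1]$,
\[
\lim_{m\to\infty}\frac{E[S_\rho]}{m}=\int_{1-\rho}^{1}F^{-1}(u)\,du=:G(\rho).
\]
This is the standard limit for the tail-sum of an L-statistic. Writing $q_\rho:=F^{-1}(1-\rho)$ for the $(1-\rho)$-quantile, the sum of the top $\lceil \rho m\rceil$ values differs from $\sum_i Z_i\,\mathbf{1}(Z_i>q_\rho)$ only through boundary terms near the threshold, whose number concentrates around its mean by the law of large numbers applied to the Bernoulli variables $\mathbf{1}(Z_i>q_\rho)$. Taking expectations gives $\tfrac1m\sum_i E[Z_i\,\mathbf{1}(Z_i>q_\rho)]=E[Z\,\mathbf{1}(Z>q_\rho)]=\int_{1-\rho}^1 F^{-1}(u)\,du$. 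Passing from convergence in probability of $S_\rho/m$ to convergence of expectations is where uniform integrability enters, and it is supplied by the finiteness of $E[Z^2]=E[|X|^{2p}]<\infty$, which bounds the expected contribution of the largest order statistics uniformly in $m$.

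With the limit $G$ in hand, define $g(\rho):=G(\rho)/E[Z]=\lim_m E[S_\rho]/S$. Since $F^{-1}(u)>0$ for every $u\in(0,1)$ (because $Z>0$ almost surely) and $F^{-1}$ is integrable on $(0,1)$ (because $E[Z]=\int_0^1 F^{-1}<\infty$), the map $\rho\mapsto G(\rho)$ is continuous by absolute continuity of the integral and strictly increasing; moreover $g(0)=0$ and $g(1)=1$, the latter because $G(1)=E[Z]$. The intermediate value theorem then produces a (unique) constant $\rho^*(p)\in(0,1)$ with $g(\rho^*)=\tfrac12$, which is precisely the asserted threshold.

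I expect the main obstacle to be the rigorous justification of $E[S_\rho]/m\to G(\rho)$, specifically the interchange of the $m\to\infty$ limit with expectation for the top-$k$ order-statistic sum. The pointwise-in-$\rho$ convergence in probability is routine via the empirical-quantile argument above, but controlling the \emph{expected} mass of the extreme order statistics uniformly in $m$ is the delicate point; here the Gaussian tail bound on $Z=|X|^p$ is essential, since without such moment control the boundary terms near $q_\rho$ could in principle carry non-vanishing expected contribution and break the clean limit.
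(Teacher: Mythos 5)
Your proposal is correct and is essentially the paper's own argument in a different parametrization: the paper fixes a value threshold $z^*$ with $\int_{z^*}^\infty z^p f(z)\,dz=\tfrac12 E[|X|^p]$ and sets $\rho^*=1-F(z^*)$, which is exactly your quantile-domain identity $G(\rho)=\int_{1-\rho}^1 F^{-1}(u)\,du=E[Z\mathbf{1}(Z>q_\rho)]$ after the substitution $u=F(t)$, and both proofs rest on the same two ingredients (the intermediate value theorem for the tail-expectation function, and the $O(\sqrt{m})$ control of the discrepancy between the top-$\lceil\rho m\rceil$ sum and the above-threshold sum). Your treatment of the expectation interchange via uniform integrability is somewhat more explicit than the paper's one-line bound $E[|T_{z^*}-S_{\rho^*}|]=O(\sqrt{m})$, but the substance is the same.
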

\begin{proof}
Let $X \sim N(0,1)$ and let $Z=|X|$. Let $f(z)$ denote the p.d.f. of
$Z$ and $F(z)$ be its c.d.f. Define $g(t)= \int_{t}^ \infty z^p
f(z)dz$. $g$ is continuous and decreasing in $[0, \infty]$, and
$g(0)=E[Z^p]=\frac{S}{m}$, $\lim _{t \rightarrow \infty} g(t)=0$.
Then there exists $z^*$ such that $g(z^*)=\frac{g(0)}{2}$, we claim
that $\rho^*=1-F(z^*)$ has the desired property.

Let $T_t=\sum_{i:Y_i \geq t^p}Y_i$. Then $E[T_{z^*}]=mg(z^*)$. Since
$E[|T_{z^*}-S_{\rho^*}|]$ is bounded by $O(\sqrt{m})$, and
$S=mg(0)$, thus $\lim _{m \rightarrow \infty}
\frac{E[S_{\rho^*}]}{S}=\frac{1}{2}.$

%
\end{proof}

\begin{prop}\label{prop:rho}
The function $\rho^*(p)$ is strictly decreasing 
in $p$ on $(0,1]$.
\end{prop}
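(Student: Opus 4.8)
The plan is to show that the cutoff level $z^{*}$ constructed in the previous lemma is strictly increasing in $p$, and then transfer this monotonicity to $\rho^{*}(p)=1-F(z^{*}(p))$. Since $F$ is the c.d.f. of $Z=|X|$, whose density $f$ is strictly positive on $(0,\infty)$, $F$ is strictly increasing; hence once $z^{*}(p)$ is shown to be strictly increasing, $\rho^{*}(p)$ is automatically strictly decreasing.

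First I would make the $p$-dependence explicit by writing $g_{p}(t)=\int_{t}^{\infty}z^{p}f(z)\,dz$, so that $z^{*}(p)$ is the unique solution of $g_{p}(z^{*})=\tfrac12 g_{p}(0)$, equivalently $\int_{z^{*}}^{\infty}z^{p}f(z)\,dz=\int_{0}^{z^{*}}z^{p}f(z)\,dz$. Thus $z^{*}(p)$ is exactly the median of the tilted probability measure with density proportional to $z^{p}f(z)$. Conceptually this already explains the result: the family $\{z^{p}f(z)\}_{p}$ has a strictly increasing likelihood ratio in $p$ (for $p_{1}<p_{2}$ the ratio is proportional to $z^{p_{2}-p_{1}}$), so the tilted laws are strictly stochastically increasing and their medians strictly increase with $p$.

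To make this quantitative I would differentiate implicitly. Setting $G(p,z)=g_{p}(z)-\tfrac12 g_{p}(0)$, we have $G(p,z^{*}(p))=0$ and $\partial_{z}G=-z^{p}f(z)<0$ at $z=z^{*}$, so the implicit function theorem yields a smooth $z^{*}(p)$ with $\operatorname{sign}\frac{dz^{*}}{dp}=\operatorname{sign}\,\partial_{p}G|_{z^{*}}$. Here $\partial_{p}G|_{z^{*}}=\tfrac12\bigl(A-B\bigr)$, where $A=\int_{z^{*}}^{\infty}(\ln z)\,z^{p}f(z)\,dz$ and $B=\int_{0}^{z^{*}}(\ln z)\,z^{p}f(z)\,dz$ are the logarithmic moments above and below $z^{*}$ against the measure $z^{p}f(z)\,dz$. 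The two regions carry equal mass $\mu=\tfrac12 g_{p}(0)$ by the defining relation, and on $(z^{*},\infty)$ one has $\ln z>\ln z^{*}$ while on $(0,z^{*})$ one has $\ln z<\ln z^{*}$; hence $A>\mu\ln z^{*}>B$, so $\partial_{p}G|_{z^{*}}>0$ and $\frac{dz^{*}}{dp}>0$. Finally $\frac{d\rho^{*}}{dp}=-f(z^{*})\,\frac{dz^{*}}{dp}<0$.

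The main obstacle is rigor rather than idea. I need to justify differentiating $g_{p}$ under the integral sign in $p$ (dominated convergence applies since $z^{p}(1+|\ln z|)f(z)$ has Gaussian tails and is integrable near $0$) and to confirm $z^{*}(p)\in(0,\infty)$ so that $f(z^{*})>0$ and $\ln z^{*}$ is finite; both follow from continuity and strict monotonicity of $g_{p}$ together with $g_{p}(0)=E[Z^{p}]>0$. The only genuinely delicate step is the sign of $A-B$, which rests on the mass-balance identity $\int_{z^{*}}^{\infty}z^{p}f=\int_{0}^{z^{*}}z^{p}f$; this is precisely the single-crossing phenomenon underlying the monotone-likelihood-ratio remark above, and I would present it through the elementary comparison $\ln z\gtrless\ln z^{*}$ to keep the argument self-contained.
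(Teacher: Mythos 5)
Your proposal is correct and follows essentially the same route as the paper: implicit differentiation of the mass-balance equation $\int_0^{z^*}x^pf=\int_{z^*}^\infty x^pf$ (your $G$ is just $-\tfrac12 H$), followed by the sign determination of the logarithmic-moment difference $A-B$ via the comparison $\ln z \gtrless \ln z^*$ on the two equal-mass regions. The only difference is that you spell out this last sign step (and the regularity of differentiating under the integral), which the paper compresses into the single assertion that the numerator is negative ``from (\ref{eqn:zp})''.
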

\begin{proof}
From the definition of $z^*$ and $\rho^*(p)$, we have
\begin{equation} \label{eqn:zp}
H(z^*,p):=\int_0^{z^*}x^pf(x)dx -\int_{z^*}^\infty x^pf(x)dx=0,
\end{equation}
and
\begin{equation} \nonumber
 \rho^*=1-F(z^*),
 \end{equation}
  where $f(\cdot)$
and $F(\cdot)$ are the p.d.f. and c.d.f. of $|X|$, $X \thicksim
N(0,1)$.

From the Implicit Function Theorem,
\begin{equation}\nonumber
\frac{dz^*}{dp}=-\frac{\frac{\partial H}{\partial p}}{\frac{\partial
H}{\partial z^{*}}}=-\frac{\int_0^{z^*} x^p(\ln x)f(x)dx
-\int_{z^*}^\infty x^p(\ln x) f(x)dx}{2z^{*p}f(z^*)}
\end{equation}

From the chain rule, we know
$\frac{d\rho^*}{dp}=\frac{d\rho^*}{dz^*}\frac{dz^*}{dp}$, thus
\begin{equation}\label{eqn:drhodp}
\frac{d\rho^*}{dp}
=\frac{\int_0^{z^*} x^p(\ln x)f(x)dx -\int_{z^*}^\infty x^p(\ln x) f(x)dx}{2z^{*p}}
\end{equation}


Note the numerator of (\ref{eqn:drhodp}) is less than 0 from
(\ref{eqn:zp}), thus $\frac{d\rho^*}{dp}<0$.

\end{proof}

We plot $\rho^*$
against $p$ numerically in Fig. \ref{fig:rho}. $\rho^*(p)$ goes to
$\frac{1}{2}$ as $p$ tends to zero. Note that $\rho^*(1)=0.239...$,
which coincides with the result in \cite{DMT07}.

\begin{figure}[t]
      \centering
      \includegraphics[scale=0.5]{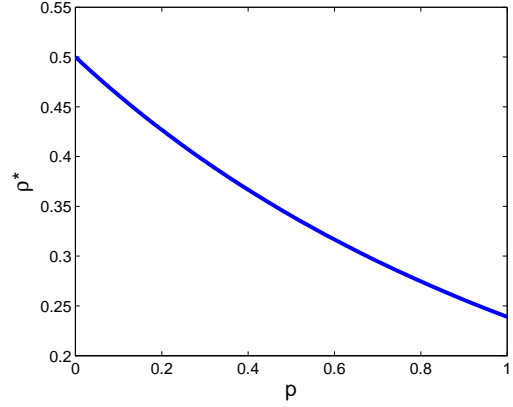}
 \caption{Threshold $\rho^*$ of successful recovery with $l_p$-minimization}
   \label{fig:rho}
\end{figure}

Now we proceed to prove that $\rho^*$ is the threshold of successful
recovery with $l_p$ minimization for $p$ in $(0,1]$. First we state
the concentration property of $S_\rho$ in the following lemma.
\begin{lemma}\label{lemma:srho}
For any $p \in (0,1]$, let $X_1$,...,$X_m$, $Y_1$,...,$Y_m$,
$S_\rho$ and $S$ be as above. For any $\rho>0$ and any $\delta>0$,
there exists a constant $c_1>0$ such that when $m$ is large enough,
with probability at least $1-2e^{-c_1m}$, $|S_\rho-E[S_\rho]| \leq
\delta S$.
\end{lemma}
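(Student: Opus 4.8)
The key difficulty is that $S_\rho$ is a sum of the top order statistics of the values $|X_i|^p$, so its terms are not independent and the standard Chernoff or Cram\'er bounds for sums of i.i.d.\ variables do not apply to it directly. The plan is to sandwich $S_\rho$ between two quantities that \emph{are} sums of i.i.d.\ variables, namely the quantities $T_t=\sum_{i:|X_i|\ge t}|X_i|^p$ already introduced in the proof of the preceding lemma, and then transfer their concentration to $S_\rho$.

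First I would fix the $\rho$-quantile: let $t$ satisfy $1-F(t)=\rho$, and for a small $\epsilon>0$ to be chosen later let $t_1<t<t_2$ satisfy $1-F(t_1)=\rho+\epsilon$ and $1-F(t_2)=\rho-\epsilon$. Writing $N_s=|\{i:|X_i|\ge s\}|$, each $N_s$ is $\mathrm{Bin}(m,1-F(s))$, so a Chernoff bound on the binomial gives, for large $m$, that $N_{t_1}>\lceil\rho m\rceil$ and $N_{t_2}<\lceil\rho m\rceil$ hold simultaneously except on an event of probability at most $2e^{-c_2 m}$. On this good event the $\lceil\rho m\rceil$ largest values all lie above $t_1$ (there are more than $\lceil\rho m\rceil$ of them), which forces $S_\rho\le T_{t_1}$; and every value above $t_2$ is among the $\lceil\rho m\rceil$ largest (there are fewer than $\lceil\rho m\rceil$ of them), which forces $S_\rho\ge T_{t_2}$. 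Hence $T_{t_2}\le S_\rho\le T_{t_1}$ with overwhelming probability.

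Next I would apply concentration to $T_{t_1}$ and $T_{t_2}$ individually. Each is a sum of $m$ i.i.d.\ nonnegative terms $|X_i|^p\mathbf{1}(|X_i|\ge t_j)$; since $p\le 1$, the variable $|X|^p$ has all exponential moments when $X\sim N(0,1)$ (it is dominated by $|X|$ for large $|X|$, and $E[e^{\lambda|X|}]<\infty$ for every $\lambda$), so Cram\'er's theorem, or a direct Chernoff bound on the moment generating function, yields $P\big(|T_{t_j}-mg(t_j)|>s\big)\le 2e^{-c_3 m}$ for any fixed $s=\Theta(m)$, in particular for $s=\tfrac{\delta}{4}S$, where $g(t_j)=\int_{t_j}^\infty z^p f(z)\,dz=E[T_{t_j}]/m$ as in the preceding lemma. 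Combining with the sandwich, on the intersection of these events $S_\rho$ lies in the interval $[\,mg(t_2)-\tfrac{\delta}{4}S,\ mg(t_1)+\tfrac{\delta}{4}S\,]$.

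Finally I would close the remaining gap using continuity. Because $g$ is continuous, $\epsilon$ can be chosen independently of $m$ so that $g(t_1)-g(t_2)=\int_{t_1}^{t_2}z^p f(z)\,dz\le \tfrac{\delta}{4}E[|X|^p]=\tfrac{\delta}{4}S/m$. Running the same sandwich in expectation together with the $O(\sqrt m)$ bound on $E[|T_t-S_\rho|]$ established in the proof of the preceding lemma shows $E[S_\rho]=mg(t)+O(\sqrt m)$, which also lies in the interval above up to an $O(\sqrt m)$ correction. Subtracting then gives $|S_\rho-E[S_\rho]|\le \tfrac{\delta}{2}S+O(\sqrt m)\le \delta S$ once $m$ is large enough, and taking $c_1=\min(c_2,c_3)$ (absorbing the finitely many failure events into the constant) yields the stated bound $1-2e^{-c_1 m}$. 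The main obstacle is the first step: turning the order-statistic sum $S_\rho$ into the i.i.d.\ sums $T_{t_j}$ through the binomial counting argument, since everything afterward is a routine Chernoff-plus-continuity estimate.
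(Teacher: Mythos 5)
Your proof is correct, but it takes a genuinely different route from the paper's. The paper treats $S_\rho$ as a function of the Gaussian vector $X\in\mathbf{R}^m$ and shows it has the modulus of continuity $|S_\rho(x)-S_\rho(x')|\le\sum_i|x_i-x_i'|^p\le m^{1-p/2}\|x-x'\|_2^p$ (via $\bigl||a|^p-|b|^p\bigr|\le|a-b|^p$ and H\"older); it then invokes the Gaussian isoperimetric inequality to get $\Pr[|S_\rho-M_\rho|\ge m^{1-p/2}t^p]\le 2e^{-t^2/2}$ around the median $M_\rho$, sets $t=\Theta(\sqrt m)$ so the deviation scale $m^{1-p/2}t^p=\Theta(m)$ matches $\delta S$, and finally bounds the median--mean gap by $O(m^{1-p/2})=o(S)$. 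You instead reduce the order-statistic sum to i.i.d.\ sums by the quantile sandwich $T_{t_2}\le S_\rho\le T_{t_1}$ (justified by binomial concentration of the exceedance counts $N_s$), apply Cram\'er/Chernoff to each $T_{t_j}$, and close the gap by continuity of $g$. Both arguments are sound. The paper's is shorter and requires no case analysis at the quantile, but leans specifically on Gaussian isoperimetry; yours is more elementary and portable---it only needs exponential moments of $|X|^p$ and continuity of the quantile function, so it would survive a change of coordinate distribution where the isoperimetric route would not---and it additionally locates $E[S_\rho]\approx mg(F^{-1}(1-\rho))$ explicitly, which is the same fact Lemma 1 of the paper asserts without proof. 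Two small polish points: your final step need not borrow the unproved $O(\sqrt m)$ bound from Lemma 1, since taking expectations of the sandwich on the high-probability event (and crudely bounding the contribution of the bad event by Cauchy--Schwarz) already places $E[S_\rho]$ in $[mg(t_2)-o(m),\,mg(t_1)+o(m)]$; and the degenerate cases $\rho\ge 1$ (where $S_\rho=S_1=T_0$ is itself an i.i.d.\ sum) and $\rho-\epsilon\le 0$ should be noted, though they are trivial.
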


\begin{proof}
Let $X=[X_1,...,X_m]^T$. If two vectors $X$ and $X'$ only differ in
co-ordinate $i$, then for any $p$, $|S_\rho(X)-S_\rho(X')| \leq
||X_i|^p-|X'_i|^p|$. Thus for any $X$ and $X'$,
\begin{equation}\nonumber
|S_\rho(X)-S_\rho(X')| \leq \sum_{i: X_i \neq X'_i}
\big||X_i|^p-|X'_i|^p\big| = \sum_{i} \big||X_i|^p-|X'_i|^p\big|.
\end{equation}
Since $\big||X_i|^p-|X'_i|^p\big| \leq |X_i-X'_i|^p$ for all $p \in
(0,1]$,
\begin{equation}\label{eqn:srho}
|S_\rho(X)-S_\rho(X')|  \leq \sum_{i} |X_i-X'_i|^p.
\end{equation}

 From
 the isoperimetric inequality for the Gaussian measure (\cite{Ledoux01}), for any set
 $A$ with measure at least a half, the set $A_t=\{x \in
 \mathbf{R}^m: d(x,A) \leq t\}$ has measure at least
 $1-e^{-t^2/2}$, where $d(x,A)= \inf_{y \in A} \|x-y\|_2$.
  Let $M_\rho$ be the median value of $S_\rho=S_\rho (X)$. Define set $A=\{x \in \mathbf{R}^m: S_\rho(x) \leq M_\rho\}$, then
\begin{equation}\nonumber
Pr[d(x,A) \leq t]\geq 1-e^{-t^2/2}.
\end{equation}
We claim that $d(x,A) \leq t$ implies that $S_\rho(x) \leq
M_\rho+m^{(1-p/2)}t^p$. If $x \in A$, then $S_\rho(x) \leq M_\rho$,
thus the claim holds as $m^{1-p/2}t^p$ is non-negative. If $x \notin
A$, then there exists $x' \in A$ such that $\|x-x'\|_2 \leq t$. Let
$u_i=1$ for all $i$ and let $v_i =|x_i-x'_i|^p$. From H\"older's
inequality
\begin{eqnarray}
\sum_{i} |x_i-x'_i|^p  
& \leq & \left( \sum_{i}|u_i|^{2/(2-p)}\right)^{1-p/2}  \left( \sum_{i} |v_i|^{2/p}\right)^{p/2} \nonumber \\
&\leq & m^{(1-p/2)} (t^2)^{p/2} = m^{(1-p/2)}t^p \label{eqn:holder}
\end{eqnarray}

From (\ref{eqn:srho}) and (\ref{eqn:holder}),
$|S_\rho(x)-S_\rho(x')| \leq m^{(1-p/2)}t^p$. Since $ x \notin A$
and $x' \in A$, then $S_\rho(x) > M_\rho \geq S_\rho(x')$. Thus
$S_\rho(x) \leq M_\rho+m^{(1-p/2)}t^p$, which verifies our claim.
Then
\begin{equation}\label{eqn:srholb}
Pr[S_\rho(x) \leq M_\rho+m^{(1-p/2)}t^p]\geq Pr[d(x,A) \leq t]\geq 1-e^{-t^2/2}.
\end{equation}
Similarly,
\begin{equation}\label{eqn:srhoub}
Pr[S_\rho(x) \geq M_\rho-m^{(1-p/2)}t^p]\geq 1-e^{-t^2/2}.
\end{equation}
Combining (\ref{eqn:srholb}) and (\ref{eqn:srhoub}),
\begin{equation}\label{eqn:srhomrho}
Pr[|S_\rho(x)-M_\rho| \geq m^{(1-p/2)}t^p]\leq 2e^{-t^2/2}.
\end{equation}

The difference of $E[S_{\rho}]$ and $M_{\rho}$ can be bounded as
follows,
\begin{eqnarray*}
|E[S_\rho]-M_\rho|
&\leq& E[|S_\rho-M_\rho|]\\
&=&\int_0^\infty Pr[|S_\rho(x)-M_\rho| \geq y] dy\\
&\leq &\int_0^\infty 2e^{-\frac{1}{2}y^{\frac{2}{p}}m^{(1-\frac{2}{p})}} dy\\
&=&m^{(1-\frac{p}{2})}\int_0^\infty 2e^{-\frac{1}{2}s^{\frac{2}{p}}}ds\\
\end{eqnarray*}

Note that $c:=\int_0^\infty 2e^{-\frac{1}{2}s^{(2/p)}}ds$ is a
finite constant for all $p \in (0,1]$. As $p>0$ and $S=mE[|x_i|^p]$,
thus for any $\delta >0$, $cm^{(1-\frac{p}{2})} <\frac{\delta}{2}S$
when $m$ is large enough.

Let $t=\left(\frac{1}{2}\delta S
m^{(\frac{p}{2}-1)}\right)^\frac{1}{p}=(\frac{1}{2}\delta
E[|x_i|^p])^{\frac{1}{p}}\sqrt{m}$, from (\ref{eqn:srhomrho}) with
probability at least ($1- 2e^{-\frac{1}{2}(\frac{1}{2}\delta
E[|x_i|^p])^{\frac{2}{p}}m}$), $|S_\rho-M_\rho| < \frac{1}{2}\delta
S$. Thus $|S_\rho-E[S_\rho]| \leq
|S_\rho-M_\rho|+|M_\rho-E[S_\rho]|< \delta S$ with probability at
least $1-2e^{-c_1m}$ for some constant $c_1$.
\end{proof}
\begin{cor}\label{cor:srho}
For any $\rho < \rho^*$, there exists a $\delta>0$ and a constant
$c_2>0$ such that when $m$ is large enough, with probability
$1-2e^{-c_2m}$, $S_\rho \leq (\frac{1}{2}-\delta)S$.
\end{cor}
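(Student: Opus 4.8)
The plan is to combine two ingredients: a limiting computation showing that $E[S_\rho]/S$ is bounded strictly below $1/2$ when $\rho<\rho^*$, and the concentration bound of Lemma~\ref{lemma:srho}, which pins $S_\rho$ near $E[S_\rho]$. First I would determine the limit of $E[S_\rho]/S$, following the argument used in the lemma defining $\rho^*$. Let $t_\rho=F^{-1}(1-\rho)$ be the threshold with $\Pr[|X|>t_\rho]=\rho$, and set $T_{t_\rho}=\sum_{i:\,|X_i|\ge t_\rho}|X_i|^p$. Since $E[T_{t_\rho}]=m\,g(t_\rho)$, and the number of indices with $|X_i|\ge t_\rho$ concentrates around $\rho m$ with $O(\sqrt m)$ fluctuation, the same reasoning that gave $E[|T_{z^*}-S_{\rho^*}|]=O(\sqrt m)$ yields $E[|T_{t_\rho}-S_\rho|]=O(\sqrt m)$. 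With $S=m\,g(0)$ this gives $\lim_{m\to\infty}E[S_\rho]/S=g(t_\rho)/g(0)$.

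The key observation is that this limit is \emph{strictly} below $1/2$. Because $\rho<\rho^*=1-F(z^*)$, we have $F(t_\rho)=1-\rho>F(z^*)$, hence $t_\rho>z^*$ as $F$ is increasing. Since $g'(t)=-t^pf(t)<0$ on $(0,\infty)$, the function $g$ is strictly decreasing there, so $g(t_\rho)<g(z^*)=g(0)/2$. Writing $L:=g(t_\rho)/g(0)$, we obtain $L<1/2$.

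To finish, set $\epsilon:=1/2-L>0$. For $m$ large enough the $O(1/\sqrt m)$ error above makes $E[S_\rho]\le(L+\epsilon/4)S$. Applying Lemma~\ref{lemma:srho} with tolerance $\delta':=\epsilon/4$ gives, with probability at least $1-2e^{-c_1m}$, that $S_\rho\le E[S_\rho]+\delta'S\le(L+\epsilon/2)S=(1/2-\epsilon/2)S$. Taking $\delta:=\epsilon/2>0$ and $c_2:=c_1$ then completes the proof.

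The main obstacle is the \emph{strictness} in the second step. The trivial monotonicity bound $S_\rho\le S_{\rho^*}$ only yields $\lim_{m\to\infty}E[S_\rho]/S\le 1/2$, which is too weak to produce a fixed $\delta>0$. Securing a genuine gap requires the limiting quantile identity $\lim_{m\to\infty}E[S_\rho]/S=g(t_\rho)/g(0)$ together with strict monotonicity of $g$; the delicate point is transferring the $O(\sqrt m)$ control of the empirical count $|\{i:|X_i|\ge t_\rho\}|$ around $\rho m$ to a general $\rho$, exactly as was carried out for $\rho^*$ in the lemma defining the threshold.
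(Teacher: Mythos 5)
Your proof is correct, and its overall architecture matches the paper's: establish that $E[S_\rho]/S$ is bounded strictly below $\tfrac12$ by a fixed margin, then invoke the concentration of Lemma~\ref{lemma:srho} to transfer the gap to $S_\rho$ itself. Where you diverge is in how the expectation gap is produced. The paper writes $E[S_{\rho^*}]-E[S_\rho]$ as the sum of the expected order statistics of ranks $\lceil\rho m\rceil+1$ through $\lceil\rho^* m\rceil$ and lower-bounds each such term by $E[|X_i|^p]=S/m$, yielding $E[S_\rho]/S\le\tfrac12-2\delta$ in one line; this implicitly uses that every order statistic in the top $\rho^*m$ ranks has expectation at least the overall mean, a fact the paper does not justify. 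You instead identify the exact limit $\lim_m E[S_\rho]/S=g(t_\rho)/g(0)$ via the quantile $t_\rho=F^{-1}(1-\rho)$, mirroring the construction of $z^*$ in the lemma defining $\rho^*$, and extract strictness from $g'(t)=-t^pf(t)<0$ together with $t_\rho>z^*$. Your route is slightly longer but makes the source of the strict gap completely explicit and avoids the unproved comparison of order statistics to the mean; the paper's route is shorter but leans on that unstated monotonicity fact. Both correctly reduce the probabilistic content to Lemma~\ref{lemma:srho}.
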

\begin{proof}
When $\rho < \rho^*$,
\begin{eqnarray*}
E[S_\rho]&=&E[S_{\rho^*}]-\sum \limits_{i=\lceil \rho m\rceil+1}^{\lceil \rho^*m \rceil} E[|X_i|^p]\\
& \leq & E[S_{\rho^*}]-(\lceil \rho^*m \rceil-\lceil \rho m\rceil) E[|X_i|^p]
\end{eqnarray*}
Then $E[S_\rho]/S \leq \frac{1}{2}-2\delta$ for a suitable $\delta$
as $S=mE[|X_i|^p]$. The result follows by combining the above with
Lemma \ref{lemma:srho}.
\end{proof}

\begin{cor}\label{cor:s1}
For any $\epsilon >0$, there exists a constant $c_3>0$ such that
when $m$ is large enough, with probability $1-2e^{-c_3m}$, it holds
that $(1-\epsilon)S \leq S_1\leq (1+\epsilon)S$.
\end{cor}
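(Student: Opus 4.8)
The plan is to recognize that $S_1$ is simply the sum of all $m$ order statistics, so that $S_1 = \sum_{i=1}^m Y_i = \sum_{i=1}^m |X_i|^p$; once every term is included, the sorting is irrelevant. Hence $S_1$ is a sum of $m$ i.i.d.\ random variables with common mean $E[|X_i|^p]$, and by definition $S = E[S_1] = m E[|X_i|^p]$. The corollary is therefore nothing more than the assertion that this i.i.d.\ sum concentrates around its mean within a multiplicative factor $(1\pm\epsilon)$, with failure probability exponentially small in $m$.

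The most economical route is to invoke Lemma \ref{lemma:srho} at the endpoint $\rho = 1$. Applying that lemma with $\delta = \epsilon$ produces a constant $c_1 > 0$ such that, for $m$ large enough, $|S_1 - E[S_1]| \leq \epsilon S$ holds with probability at least $1 - 2e^{-c_1 m}$. Since $E[S_1] = S$, this inequality is exactly $(1-\epsilon) S \leq S_1 \leq (1+\epsilon) S$, and taking $c_3 = c_1$ finishes the argument.

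Because all of the analytic work — the Gaussian isoperimetric inequality and the H\"older bound on the Lipschitz modulus of $S_\rho$ — was already carried out in Lemma \ref{lemma:srho} uniformly in $\rho$, there is essentially no obstacle left to overcome here: the corollary is a direct specialization of a result already in hand, and the only thing to verify is that $\rho = 1$ is an admissible choice (it is, since the lemma is stated for any $\rho > 0$). If one preferred a self-contained proof, one could instead bound $S_1$ directly by a Chernoff argument, using that $|X_i|^p$ has sub-Gaussian (indeed lighter-than-Gaussian, since $p \leq 1 < 2$) tails, so that $E[e^{\lambda |X_i|^p}]$ is finite for all $\lambda$ and Cram\'er's theorem yields an exponential large-deviation rate; but this merely reproves a special case of what Lemma \ref{lemma:srho} already supplies, so I would not pursue it.
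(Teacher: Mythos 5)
Your proposal is correct and matches the paper's intent exactly: the paper states this corollary without proof precisely because it is the $\rho=1$, $\delta=\epsilon$ specialization of Lemma~\ref{lemma:srho} combined with the identity $E[S_1]=S$. Your verification that $\rho=1$ is an admissible parameter in the lemma is the only point worth checking, and you checked it.
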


The above two corollaries indicate that with overwhelming
probability the sum of the largest $\lceil \rho m \rceil $ terms of
$Y_i$'s is less than half of the total sum $S_1$ if $\rho< \rho^*$.
The following lemma extends the result to every vector $Az$ where
matrix $A^{m \times n}$ has i.i.d. Gaussian entries and $z$ is any
vector in $\mathbf{R}^n$.
\begin{lemma}\label{lemma:slp}
For any $0<p \leq 1$, given any $\rho < \rho^*(p)$, there exist
constants $c_4$, $c_5$, $\delta >0$ such that when $m \geq c_4 n$
and $n$ is large enough, with probability $1-e^{-c_5n}$, an $m
\times n$ matrix $A$ with i.i.d. $N(0,1)$ entries has the following
property: for every $z \in \mathbf{R}^n$ and every subset $T
\subseteq \{1,...,m\}$ with $|T| \leq \rho m$, $\sum \limits_ {i \in
T^c} |(Az)_i|^p - \sum \limits_{i \in T} |(Az)_i|^p \geq \delta S
\|z\|_2^p$.
\end{lemma}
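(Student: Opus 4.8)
The plan is to reduce the uniform statement to the fixed-vector concentration already established in Corollaries \ref{cor:srho} and \ref{cor:s1}, through a homogeneity reduction followed by a covering (epsilon-net) argument over the unit sphere. First I would exploit homogeneity: since $(Az)_i=\langle a_i,z\rangle$ is linear in $z$, every term $|(Az)_i|^p$ scales as $\|z\|_2^p$, so both sides of the target inequality are homogeneous of degree $p$ in $z$. Hence it suffices to prove $\sum_{i\in T^c}|(Az)_i|^p-\sum_{i\in T}|(Az)_i|^p\geq \delta S$ for every unit vector $z$ and every $T$ with $|T|\leq\rho m$. Next I would observe that for a fixed $z$ on the sphere the entries $(Az)_i=\langle a_i,z\rangle$ are i.i.d.\ $N(0,1)$, so the collection $\{|(Az)_i|^p\}$ has exactly the law of $\{|X_i|^p\}$ from the preceding lemmas. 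The worst $T$ is the set of indices carrying the $\lceil\rho m\rceil$ largest terms, for which the left-hand side equals $S_1(Az)-2S_\rho(Az)$. Applying Corollaries \ref{cor:srho} and \ref{cor:s1} (with $\epsilon$ chosen small relative to the gap supplied by Corollary \ref{cor:srho}) then yields, for each fixed unit $z$, a fixed constant $\delta_0>0$ with $S_1(Az)-2S_\rho(Az)\geq 2\delta_0 S$ with probability $1-e^{-cm}$.

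To pass from fixed $z$ to all $z$, I would take an $\eta$-net $\mathcal{N}$ of the unit sphere $S^{n-1}$ of cardinality at most $(3/\eta)^n$, apply the fixed-vector bound at each net point, and union-bound, so that the failure probability is at most $(3/\eta)^n e^{-cm}$; since $m\geq c_4 n$, choosing $c_4$ large (depending on $\eta$, hence on $p$ and $\delta_0$) makes this at most $e^{-c_5 n}$. It remains to control the deviation between a general unit $z$ and its nearest net point $z_0$. Here I would use the $p$-H\"older inequality $\big||a|^p-|b|^p\big|\leq|a-b|^p$ (already invoked in Lemma \ref{lemma:srho}) together with H\"older's inequality to get, for $F(z):=S_1(Az)-2S_\rho(Az)$,
\[
|F(z)-F(z_0)|\leq 3\sum_i|(A(z-z_0))_i|^p\leq 3m^{1-p/2}\|A(z-z_0)\|_2^p\leq 3m^{1-p/2}\|A\|_{\mathrm{op}}^p\,\eta^p.
\]
Conditioning on the standard event $\|A\|_{\mathrm{op}}\leq C\sqrt{m}$ (which holds with probability $1-e^{-cm}$ for Gaussian $A$ with $m\geq n$), this deviation is bounded by $3C^p\eta^p m=3C^p\eta^p S/E[|X|^p]$, a constant multiple of $S$. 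I would then fix $\eta$ as a small constant, independent of $m$ and $n$, so that this quantity is at most $\delta_0 S$; combined with the net bound, this gives $F(z)\geq 2\delta_0 S-\delta_0 S=\delta_0 S$ for every unit $z$, and we set $\delta=\delta_0$.

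The main obstacle is precisely this continuity step. For $p<1$ the map $t\mapsto|t|^p$ is not Lipschitz, only H\"older of exponent $p$, so the perturbation bound degrades to order $m\,\eta^p$ rather than $m\,\eta$, which is the same order as the target quantity $\delta_0 S=\Theta(m)$. What makes the argument succeed is that $\eta$ can be taken as a fixed constant depending only on $p$ and $\delta_0$, so that $3C^p\eta^p/E[|X|^p]$ is a controllable constant fraction, while the net size $(3/\eta)^n$ remains $e^{O(n)}$ and is defeated by the exponent $cm\geq c\,c_4 n$ once $c_4$ is large. The remaining steps are routine: the union-bound arithmetic and invoking the Gaussian operator-norm tail, both requiring only $m\geq c_4 n$ with $c_4$ sufficiently large.
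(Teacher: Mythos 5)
Your proposal is correct and follows the same skeleton as the paper's proof (reduction to the unit sphere by homogeneity, fixed-vector concentration via Corollaries \ref{cor:srho} and \ref{cor:s1}, a net plus union bound defeated by taking $m\geq c_4 n$ with $c_4$ large), but you handle the crucial net-to-sphere continuity step differently. The paper never bounds $|F(z)-F(z_0)|$ directly: it writes an arbitrary unit vector as an infinite chaining series $z=\sum_{j\geq 0}\gamma_j v_j$ with every $v_j$ in the net and $\gamma_j\leq\gamma^j$, then controls $\sum_{i\in T}|(Az)_i|^p$ and $\sum_i|(Az)_i|^p$ term by term using only the subadditivity $|a+b|^p\leq|a|^p+|b|^p$ and the net-point bounds themselves, summing a geometric series in $\gamma^p$. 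You instead use a one-step approximation and bound the modulus of continuity of $F$ via $\bigl||a|^p-|b|^p\bigr|\leq|a-b|^p$, H\"older's inequality, and the Gaussian operator-norm bound $\|A\|_{\mathrm{op}}\leq C\sqrt{m}$; this is valid (the top-$k$-sum functional is indeed $1$-Lipschitz coordinatewise, as the paper already uses in Lemma \ref{lemma:srho}, and the resulting perturbation $3C^p\eta^p m$ is a controllable constant fraction of $S=m\,E[|X|^p]$), but it imports one extra standard ingredient (operator-norm concentration) that the paper's chaining argument avoids. In exchange, your version isolates the continuity estimate cleanly and avoids the bookkeeping of the infinite series; you correctly identify that the whole argument survives only because $\eta$ can be a fixed constant, so the $\eta^p$ (rather than $\eta$) H\"older degradation is harmless.
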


\begin{proof}
For any given $\gamma>0$, there exists a $ \gamma$-net $K$ of
cardinality less than $(1+\frac{2}{\gamma})^n$(\cite{Ledoux01}).  A
$ \gamma$-net $K$ is a set of points such that $\|v^k\|_2=1$ for all
$v^k$ in $K$ and for any $z$ with $\|z\|_2=1$, there exists some
$v^k$ such that $\|z-v^k\|_2 \leq \gamma$. 

Since $A$ has i.i.d $N(0,1)$ entries, then $Av^k$ has $m$ i.i.d.
$N(0,1)$ entries. Applying a union bound to Corollary \ref{cor:srho}
and \ref{cor:s1}, we know that for some $\delta>0$ and for every
$\epsilon
>0$, with probability $1-2e^{-cm}$ for some $c>0$, we have
\begin{equation}\label{prop:srho}
S_\rho(Av^k) \leq
(\frac{1}{2}-\delta)S
\end{equation} and
\begin{equation}\label{prop:s1}
(1-\epsilon)S \leq S_1(Av^k)\leq (1+\epsilon)S
\end{equation}
hold for a vector $v^k$ in $K$.
 Taking $m=c_4n$ for large enough $c_4$, from union bound we get
 that (\ref{prop:srho}) and (\ref{prop:s1}) hold for all the points in $K$ at the same time with probability at least $1-e^{-c_5n}$ for some $c_5>0$.

 For any $z$ such that $\|z\|_2=1$, there exists $v_0$ in $K$ such
 that $\|z-v_0\|_2\triangleq \gamma_1 \leq \gamma$. Let $z_1$ denote $z-v_0$,
 then $\|z_1-\gamma_1v_1\|_2 \triangleq \gamma_2 \leq \gamma_1 \gamma \leq \gamma^2$ for
 some $v_1$ in $K$. Repeating this process, we have
\begin{equation}\nonumber
 z=\sum_{j\geq 0} \gamma_j v_j
\end{equation}
where $\gamma_0=1$, $\gamma_j \leq \gamma^j$ and $v_j \in K$.

Thus for any $z \in \mathbf{R}^n$, we have
 $z=\|z\|_2\sum_{j\geq 0} \gamma_j v_j$.

For any index set $T$ with $|T| \leq \rho m$,

\begin{eqnarray*}
\sum \limits_{i \in T} |(Az)_i|^p &=& \|z\|_2^p \sum \limits_{i \in T} |(\sum \limits_{j \geq 0} \gamma_j Av_j)_i|^p \\
& \leq & \|z\|_2^p \sum \limits_{i \in T} \sum \limits_{j \geq 0} \gamma^{jp} |(Av_j)_i|^p \\
&= & \|z\|_2^p  \sum \limits_{j \geq 0} \gamma^{jp} \sum \limits_{i \in T} |(Av_j)_i|^p \\
& \leq & S\|z\|_2^p \frac{1-2\delta}{2(1-\gamma^p)}
\end{eqnarray*}

\begin{eqnarray*}
\sum \limits_{i} |(Az)_i|^p &=& \|z\|_2^p \sum \limits_{i } |(\sum \limits_{j \geq 0} \gamma_{j} Av_j)_i|^p \\
& \geq & \|z\|_2^p \sum \limits_{i}(|(Av_0)_i|^p- \sum \limits_{j \geq 1} \gamma_{j}^p |(Av_j)_i|^p) \\
& \geq & \|z\|_2^p (\sum \limits_{i} |(Av_0)_i|^p- \sum \limits_{j \geq 1} \gamma^{jp} \sum \limits_{i}|(Av_j)_i|^p) \\
& \geq & \|z\|_2^p ((1-\epsilon)S- \sum \limits_{j \geq 1} \gamma^{jp} (1+\epsilon)S) \\
& \geq &  S\|z\|_2^p \frac{1-2\gamma^p-\epsilon}{1-\gamma^p}
\end{eqnarray*}

Thus  $\sum \limits_ {i \in T^c} |(Az)_i|^p - \sum \limits_{i \in T}
|(Az)_i|^p \geq S \|z\|_2^p
\frac{2\delta-2\gamma^p-\epsilon}{1-\gamma^p}$.
For a given $\delta$, we can pick $\gamma$ and $\epsilon$ small
enough such that $\sum \limits_ {i \in T^c} |(Az)_i|^p - \sum
\limits_{i \in T} |(Az)_i|^p \geq \delta S \|z\|_2^p$.
\end{proof}

We can now establish one main result regarding the threshold of
successful recovery with $l_p$-minimization.
\begin{theorem}
For any $0<p \leq 1$, given any $\rho < \rho^*(p)$, there exist
constants $c_4$, $c_5 >0$ such that when $m \geq c_4 n$ and $n$ is
large enough, with probability $1-e^{-c_5n}$, an $m \times n$ matrix
$A$ with i.i.d. $N(0,1)$ entries has the following property: for
every $f \in \mathbf{R}^n$ and every error $e$ with its support $T$
satisfying $|T| \leq \rho m$, $f$ is the unique solution to the
$l_p$-minimization problem (\ref{eqn:lp}).
\end{theorem}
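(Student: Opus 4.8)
The plan is to invoke the equivalent condition of Theorem~\ref{thm:slp} and to verify it using the uniform bound already established in Lemma~\ref{lemma:slp}. By Theorem~\ref{thm:slp}, on a fixed realization of $A$, the vector $f$ is the unique solution to the $l_p$-minimization problem for every $f$ and every $\rho m$-sparse $e$ if and only if the strict inequality $\sum_{i \in T} |(Az)_i|^p < \sum_{i \in T^c} |(Az)_i|^p$ holds for every nonzero $z \in \mathbf{R}^n$ and every support $T$ with $|T| \leq \rho m$. Thus it suffices to show that this entire family of strict inequalities holds simultaneously, with the stated probability, for a Gaussian matrix $A$ with $m \geq c_4 n$.

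First I would apply Lemma~\ref{lemma:slp} directly: for the given $\rho < \rho^*(p)$, it produces constants $c_4, c_5, \delta > 0$ such that, whenever $m \geq c_4 n$ and $n$ is large, with probability at least $1 - e^{-c_5 n}$ the matrix $A$ satisfies
\[
\sum_{i \in T^c} |(Az)_i|^p - \sum_{i \in T} |(Az)_i|^p \geq \delta S \|z\|_2^p
\]
for every $z \in \mathbf{R}^n$ and every $T$ with $|T| \leq \rho m$. The crucial feature is that this bound is uniform over all $z$ and all admissible $T$ on a single draw of $A$, which is exactly the quantifier structure demanded by Theorem~\ref{thm:slp}.

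To finish, I would observe that $S = m\,E[|X_i|^p] > 0$ and $\delta > 0$, so for any nonzero $z$ we have $\|z\|_2 > 0$ and hence $\delta S \|z\|_2^p > 0$. The inequality from Lemma~\ref{lemma:slp} then yields the strict inequality $\sum_{i \in T} |(Az)_i|^p < \sum_{i \in T^c} |(Az)_i|^p$ required by Theorem~\ref{thm:slp}; the degenerate case $z = 0$ corresponds to the candidate $x = f$ itself and so does not affect uniqueness. Consequently, on the event of probability $1 - e^{-c_5 n}$ furnished by Lemma~\ref{lemma:slp}, the condition of Theorem~\ref{thm:slp} is met, and therefore $f$ is the unique minimizer for every $f$ and every error $e$ with $|T| \leq \rho m$.

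Since all the analytic and probabilistic work has already been absorbed into Lemmas~\ref{lemma:srho}--\ref{lemma:slp} and Theorem~\ref{thm:slp}, no genuine obstacle remains. The only point that requires a moment of care is the passage from the nonstrict quantitative bound of Lemma~\ref{lemma:slp} to the strict inequality of Theorem~\ref{thm:slp}, which is secured by the strict positivity of $\delta S \|z\|_2^p$ for $z \neq 0$.
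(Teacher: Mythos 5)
Your proposal is correct and follows exactly the paper's own argument: apply Lemma~\ref{lemma:slp} to get the uniform bound $\sum_{i \in T^c} |(Az)_i|^p - \sum_{i \in T} |(Az)_i|^p \geq \delta S \|z\|_2^p > 0$ for nonzero $z$, then conclude via the equivalence in Theorem~\ref{thm:slp}. The paper's proof is a two-line version of the same reasoning.
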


\begin{proof}
Lemma \ref{lemma:slp} indicates that $\sum _{i \in T^c} |(Az)_i|^p -
\sum _{i \in T} |(Az)_i|^p \geq \delta S \|z\|_2^p>0$ for every
non-zero $z$, then from Theorem \ref{thm:slp}, $f$ is the unique
solution to the $l_p$-minimization problem (\ref{eqn:lp}).
\end{proof}

We remark here that $\rho^*$ is a sharp bound for successful
recovery. For any $\rho>\rho^*$, from Lemma \ref{lemma:srho}, with
overwhelming probability the sum of the largest $\lceil \rho m
\rceil$ terms of $|(Az)_i|^p$'s is more than the half of the total
sum $S_1$, then Theorem \ref{thm:slp} indicates that the
$l_p$-recovery fails in this case. In fact, for any vector $f' \neq
f$, let $z=f'-f$, and let $T$ be the support of the largest $\lceil
\rho m \rceil$ terms of $|(Az)_i|^p$'s. If the error vector $e$
agrees with $|(Az)_i|^p$ on the support $T$ and is zero elsewhere,
then with large probability $\|e-Az\|_p^p$ is no greater than that
of $\|e\|_p^p$, which implies that $l_p$-minimization cannot
correctly return $f$. Proposition \ref{prop:rho} thus implies that
the threshold strictly decreases as $p$ increases. The performance
of $l_{p_1}$-minimization is better than $l_{p_2}$-minimization for
$p_1 <p_2 \leq 1$ in the sense that the sparsity requirement for the
arbitrary error vector is less strict for smaller $p$.
\section{Recovery From Error Vector With Fixed Support and Signs}\label{sec:wbd}
In Section \ref{sec:sbd}, for some $\rho>0$, we call
$l_p$-minimization successful if and only if it can recover $f$ from
any error $e$ whose support size is at most $\rho m$. Here we only
require $l_p$-minimization to recover $f$ from errors with fixed but
unknown support and signs. We will provide a sharp threshold
$\rho_w^*$ of the proportion of errors below which
$l_p$-minimization is successful.

Once the support and the signs of an error vector is fixed, the
condition of successful recovery with $l_1$-minimization from any
such error vector is the same, however, the condition of successful
recovery with $l_p$-minimization from different error vectors
differs even the support and the signs of the error is fixed. Here
we consider the worst case scenario in the sense that the recovery
with $l_p$-minimization is defined to be ``successful''
if $f$ can be recovered from any such error $e$. We characterize 
this case in Theorem \ref{thm:wb}.
Note that if there is further constraint on $e$, then the condition
of successful recovery with $l_p$-minimization may be different from
the one stated in Theorem \ref{thm:wb}.

\begin{theorem}\label{thm:wb}
Given any $p \in (0,1)$, for every $f \in \mathbf{R}^n$ and every
error $e$ with fixed support $T$ and fixed sign for each entry $e_i,
i \in T$, if $f$ is always the unique solution to $l_p$-minimization
problem (\ref{eqn:lp}), then
\begin{equation}\nonumber
\sum \limits_{i \in T^-} |(Az)_i|^p \leq \sum \limits_{i \in T^c}
|(Az)_i|^p
\end{equation}
for all $z \in \mathbf{R}^n$ where $T^-=\{i \in T: (Az)_ie_i <0\}$.

Conversely, $f$ is always the unique solution to $l_p$-minimization
problem (\ref{eqn:lp}) provided that
\begin{equation}\nonumber
\sum \limits_{i \in T^-} |(Az)_i|^p < \sum \limits_{i \in T^c}
|(Az)_i|^p
\end{equation}
for all non-zero $z \in \mathbf{R}^n$.
\end{theorem}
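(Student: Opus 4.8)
The plan is to reduce the whole statement to an analysis of the objective $\|e-Ax\|_p^p$ after the change of variables $x=f+z$, which gives $y-Ax=(Af+e)-A(f+z)=e-Az$. Writing $b_i=(Az)_i$ and splitting the coordinates according to $T$ and $T^c$, the objective at the candidate $f$ (i.e. $z=0$) equals $\|e\|_p^p=\sum_{i\in T}|e_i|^p$, while at $f+z$ it equals $\sum_{i\in T^c}|b_i|^p+\sum_{i\in T}|e_i-b_i|^p$. Hence, for a fixed $e$, $f$ is the unique minimizer iff $\sum_{i\in T^c}|b_i|^p>D(z,e):=\sum_{i\in T}\bigl(|e_i|^p-|e_i-b_i|^p\bigr)$ for every $z\neq0$. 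Since recovery is required for \emph{every} error with the prescribed support $T$ and signs but arbitrary magnitudes, the correct object is the worst-case deficit $\sup_e D(z,e)$, the supremum being taken over all admissible magnitudes $|e_i|$.

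The crux is to evaluate this worst-case deficit coordinatewise. Writing $e_i=\sigma_i s_i$ with $s_i=|e_i|>0$ and fixed sign $\sigma_i$, the $i$-th summand is $\phi_i(s_i)=s_i^p-|s_i-\sigma_i b_i|^p$. Using only monotonicity of $s\mapsto s^p$ and elementary calculus for $0<p<1$, I would show that $\sup_{s>0}\phi_i(s)=|b_i|^p$ when $\sigma_i b_i>0$ (attained at $s_i=|b_i|$) and $\sup_{s>0}\phi_i(s)=0$ when $\sigma_i b_i\le 0$ (only approached, as $s_i\to\infty$, when $\sigma_i b_i<0$). Because $\sigma_i b_i>0$ is exactly the condition $e_i(Az)_i>0$, separability gives $\sup_e D(z,e)=\sum_{i\in T^+(z)}|(Az)_i|^p$, where $T^+(z)=\{i\in T:e_i(Az)_i>0\}$ is the set of \emph{same-sign} coordinates, not $T^-(z)$.

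The two formulations are reconciled by the symmetry $z\mapsto -z$, which leaves every $|(Az)_i|^p$ unchanged but interchanges $T^+(z)$ and $T^-(z)$; since both assertions of the theorem are quantified over all $z\in\mathbf{R}^n$, the condition stated with $T^-$ is equivalent to the same condition with $T^+$. For necessity I would fix $z\neq0$ and take the supremum over $e$ in the recovery inequality $\sum_{T^c}|(Az)_i|^p>D(z,e)$ to obtain $\sum_{T^c}|(Az)_i|^p\ge\sum_{T^+(z)}|(Az)_i|^p$ (only $\ge$, since the supremum need not be attained), and then apply the symmetry to rewrite this as $\sum_{T^-}|(Az)_i|^p\le\sum_{T^c}|(Az)_i|^p$. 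For sufficiency, the strict hypothesis together with the symmetry yields $\sum_{T^+(z)}|(Az)_i|^p<\sum_{T^c}|(Az)_i|^p$ for every $z\neq0$, and since $D(z,e)\le\sup_e D(z,e)=\sum_{T^+(z)}|(Az)_i|^p$ for every admissible $e$, we get $\sum_{T^c}|(Az)_i|^p>D(z,e)$, i.e. $\|e-Az\|_p^p>\|e\|_p^p$, for all $z\neq0$ and all $e$; thus $f$ is the unique solution.

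The step I expect to be the main obstacle is the coordinatewise supremum computation, and in particular keeping precise track of attainment. The same-sign maximum is attained, which is what makes the adversary's deficit achievable and drives the sufficiency bound, whereas the opposite-sign supremum is only a non-attained limit. This asymmetry is exactly what forces the necessary condition to be the non-strict $\le$ while the sufficient condition is the strict $<$, and it is also why the statement is restricted to $p<1$: at $p=1$ the function $\phi_i$ becomes piecewise linear and the opposite-sign coordinates contribute a constant rather than a vanishing deficit, changing the worst case.
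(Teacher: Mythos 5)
Your proposal is correct and in substance coincides with the paper's proof: the near\--maximizing errors you identify (magnitude $|(Az)_i|$ on the sign\--mismatched coordinates, magnitude tending to infinity on the sign\--matched ones) are exactly the adversarial $e$ the paper constructs for necessity, and your bound $D(z,e)\le\sup_e D(z,e)$ is exactly the paper's coordinatewise inequalities $|e_i+(Az)_i|^p\ge|e_i|^p$ on $T^+$ and $|e_i+(Az)_i|^p\ge|e_i|^p-|(Az)_i|^p$ on $T^-$ for sufficiency. The only differences are cosmetic: you package both directions as a single worst\--case supremum computation and use the substitution $x=f+z$ rather than the paper's $z=f-x$, which you correctly reconcile via the symmetry $z\mapsto -z$; your attained\--versus\--approached discussion also gives a cleaner explanation of why necessity yields only the non\--strict inequality.
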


\begin{proof}
First part. Suppose there exists $z$ such that $\sum _{i \in T^-}
|(Az)_i|^p > \sum _{i \in T^c} |(Az)_i|^p$, let $\delta=\sum _{i \in
T^-} |(Az)_i|^p - \sum _{i \in T^c} |(Az)_i|^p>0$.

Let $e_i =0$ for every $i$ in $T^c$, let $e_i=-(Az)_i$ for every $i$
in $T^-$. For every $i$ in $T^+:= T-T^-$, let $e_i$ satisfy
$(Az)_ie_i \geq 0$. As $p \in (0,1)$, we can pick $e_i$ ($i \in
T^+$) with $|e_i|$ large enough such that $\sum_{i \in T^+}
|e_i+(Az)_i|^p - \sum_{i \in T^+} |e_i|^p< \frac{\delta}{2}$. Then
\begin{eqnarray*}
\|e+Az\|_p^p &=& \sum \limits_{i\in T^-} 0+ \sum \limits_{i\in T^+} |e_i+(Az)_i|^p+\sum \limits_{i\in T^c} |(Az)_i|^p\\
&<& \sum
\limits_{i \in T^+} |e_i|^p + \frac{\delta}{2}+\sum \limits_{i\in T^c} |(Az)_i|^p\\
&=& \sum
\limits_{i \in T^+} |e_i|^p + \frac{\delta}{2}+\sum \limits_{i \in T^-} |(Az)_i|^p-\delta\\
&=&  \|e\|_p^p - \frac{\delta}{2}.
\end{eqnarray*}
Thus $\|y-A(f-z)\|_p^p=\|e+Az\|_p^p < \|e\|_p^p=\|y-Af\|_p^p$, $f$
is not a solution to (\ref{eqn:lp}), which is a contradiction.

Second part. For any $e$ on support $T$ with fixed signs and for any
$f$, let $y=Af+e$. For any $x \neq f$, let $z=f-x$, and so
\begin{eqnarray*}
&&\|y-Ax\|_p^p =\|(y-Af)+Az\|_p^p\\
&= &\sum\limits_{i \in T^+} |e_i+(Az)_i|^p +\sum\limits_{i \in T^-} |e_i+(Az)_i|^p+ \sum\limits_{i \in T^c} |(Az)_i|^p\\
& \geq & \sum\limits_{i \in T^+} |e_i|^p +\sum\limits_{i \in T^-} (|e_i|^p-|(Az)_i|^p)+\sum \limits_{i \in T^c} |(Az)_i|^p\\
& > & \|e\|_p^p.
\end{eqnarray*}

The first inequality holds as for each $i$ in $T^+$, $(Az)_i$ has
the same sign as that of $e_i$ if not zero; and for $p \in (0,1)$,
$|e_i+(Az)_i|^p \geq |e_i|^p-|(Az)_i|^p$ holds. The second
inequality comes from the assumption that $\sum \limits_{i \in T^-}
|(Az)_i|^p < \sum \limits_{i \in T^c} |(Az)_i|^p$. Thus
$\|y-Ax\|_p^p$>$\|y-Af\|_p^p$ for all $x \neq f$.
\end{proof}

\begin{lemma}\label{lemma:rhow}
Let $X_1$, $X_2$,...,$X_m$ be i.i.d. $N(0,1)$ random variables and
$T$ be a set of indices with size $|T|=\rho m$ for some $\rho>0$.
Let $e \in \mathbf{R}^m$ be any vector on support $T$ with fixed
signs for each entry. If $\rho< \rho^*_w =\frac{2}{3}$, for every $
\epsilon >0$, when $m$ is large enough,  with probability
$1-e^{-c_6m}$ for some constant $c_6>0$, the following two
properties hold:
\begin{itemize}
\item $\frac{1}{2}\rho m (\mu-\epsilon) < \sum_{i \in T: X_ie_i <0} |X_i|^p < \frac{1}{2}\rho m (\mu+\epsilon)$ 
\item $ (1-\rho) m (\mu-\epsilon) < \sum_{i \in T^c} |X_i|^p < (1-\rho) m
(\mu+\epsilon)$.
\end{itemize}
where $\mu=E[|X|^p]$, $X \sim N(0,1)$.
\end{lemma}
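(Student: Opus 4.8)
The plan is to prove the two concentration statements separately and then combine them with a union bound; each reduces to a standard Chernoff (Cram\'er large-deviation) estimate for a sum of i.i.d.\ light-tailed random variables, so the fact that really has to be checked is that $Z:=|X|^p$, $X\sim N(0,1)$, has a moment generating function that is finite near the origin. Because $0<p\le 1<2$, the Gaussian tail $e^{-x^2/2}$ dominates $e^{\lambda x^p}$ for every fixed $\lambda$, so $E[e^{\lambda Z}]<\infty$ for all $\lambda$; in particular the MGF of $Z-\mu$ is finite in a neighborhood of $0$, which is exactly what a Chernoff bound needs.

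I would first handle the sum over $T^c$. The variables $\{|X_i|^p:i\in T^c\}$ are $(1-\rho)m$ i.i.d.\ copies of $Z$ with $E[Z]=\mu$, and by the light-tail remark above a direct Chernoff bound on $\sum_{i\in T^c}(|X_i|^p-\mu)$ gives, for every $\epsilon>0$, a rate $J_2(\epsilon)>0$ with
\[
\Pr\Big[\Big|\textstyle\sum_{i\in T^c}|X_i|^p-(1-\rho)m\mu\Big|\ge (1-\rho)m\,\epsilon\Big]\le 2e^{-(1-\rho)m\,J_2(\epsilon)},
\]
which is the second claimed inequality up to the exponentially small failure probability.

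For the sum over $T^-=\{i\in T: X_ie_i<0\}$, I would write each summand as $W_i:=\mathbf 1[X_ie_i<0]\,|X_i|^p$ and use the key structural observation that for a symmetric random variable the sign and the magnitude are independent. Regardless of whether $e_i>0$ or $e_i<0$, the indicator $\mathbf 1[X_ie_i<0]$ is a $\mathrm{Bernoulli}(1/2)$ variable independent of $|X_i|$, so the $W_i$ are i.i.d.\ nonnegative variables with common mean $E[W_i]=\tfrac12\mu$, and their law does \emph{not} depend on the prescribed signs of $e$. Since $0\le W_i\le|X_i|^p$, each $W_i$ inherits the finite-MGF property, so the same Chernoff argument over the $\rho m$ indices of $T$ yields a rate $J_1(\epsilon)>0$ with
\[
\Pr\Big[\Big|\textstyle\sum_{i\in T}W_i-\tfrac12\rho m\mu\Big|\ge \tfrac12\rho m\,\epsilon\Big]\le 2e^{-\rho m\,J_1(\epsilon)},
\]
giving the first claimed inequality.

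Finally I would take a union bound over these two events and absorb the exponents into a single constant $c_6:=\tfrac12\min\{(1-\rho)J_2(\epsilon),\,\rho J_1(\epsilon)\}>0$, so that both properties hold simultaneously with probability at least $1-e^{-c_6 m}$ once $m$ is large. Note that the hypothesis $\rho<\rho_w^*=\tfrac23$ is not actually required for the concentration itself (any $\rho\in(0,1)$ keeps $T^c$ nonempty); its role is only in the sequel, where picking $\epsilon$ small enough forces $\tfrac12\rho m(\mu+\epsilon)<(1-\rho)m(\mu-\epsilon)$ and thereby separates the two intervals. The only real obstacle is thus technical rather than structural: verifying the finite moment-generating-function (light-tail) condition that powers the exponential concentration, and pairing it with the sign/magnitude independence that turns the first sum into an i.i.d.\ sum whose distribution is insensitive to the fixed signs of $e$.
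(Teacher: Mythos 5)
Your proposal is correct and follows essentially the same route as the paper: decompose the first sum via the indicator $\mathbf{1}[X_ie_i<0]$ (the paper's $s_i$), use the symmetry of $N(0,1)$ to get mean $\tfrac12\mu$ per term, apply Chernoff bounds to both sums, and finish with a union bound. Your added remarks --- explicitly verifying the finite moment generating function of $|X|^p$ and observing that the hypothesis $\rho<\tfrac23$ plays no role in the concentration itself --- are accurate refinements of details the paper leaves implicit.
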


\begin{proof}
Define a random variable $s_i$ for each $i$ in $T$ that is equal to
1 if $X_ie_i <0$ and equal to 0 otherwise. Then $\sum_{i \in T:
X_ie_i <0} |X_i|^p=\sum_{i \in T} |X_i|^ps_i$. $
E[|X_i|^ps_i]=\frac{1}{2}\mu$ for every $i$ in $T$ as $X_i \sim
N(0,1)$. From Chernoff bound, for any $\epsilon >0$, there exist
$d_1>0$ and $d_2>0$ such that
\begin{itemize}
\item[] $Pr[\sum_{i \in T} |X_i|^ps_i \leq \frac{1}{2}\rho m (\mu-\epsilon)] \leq e^{-d_1 m}$, 

\item[] $Pr[\sum_{i \in T} |X_i|^ps_i \geq \frac{1}{2}\rho m (\mu+\epsilon)] \leq e^{-d_2 m}.$ 
\end{itemize}
Again from Chernoff bound, there exist some constants $d_3>0$,
$d_4>0$ such that
\begin{itemize}
\item[]
$Pr[\sum_{i \in T^c} |X_i|^p \leq (1-\rho) m (\mu-\epsilon)] \leq e^{-d_3 m},$ 
\item[]
$Pr[\sum_{i \in T^c} |X_i|^p \geq (1-\rho) m (\mu+\epsilon)] \leq e^{-d_4 m}.$ 
\end{itemize}
By union bound, 
there exists some constant $c_6>0$ such that the two properties
stated in the lemma hold with probability at least $1-e^{-c_6 m}$.

\end{proof}

Lemma \ref{lemma:rhow} implies that $\sum_{i \in T: X_ie_i <0}
|X_i|^p< \sum_{i \in T^c} |X_i|^p$ holds with large probability when
$|T|=\rho m < \frac{2}{3}m$. Applying the similar net argument in
Section \ref{sec:sbd}, we can extend the result to every vector $Az$
where matrix $A^{m \times n}$ has i.i.d. Gaussian entries and $z$ is
any vector in $\mathbf{R}^n$. Then we can establish the main result
regarding the threshold of successful recovery with
$l_p$-minimization from errors with fixed support and signs.
\begin{theorem}
For any $p \in (0,1)$, given any $\rho < \frac{2}{3}$, there exist
constants $c_7$, $c_8 >0$ such that when $m \geq c_7 n$ and $n$ is
large enough, with probability $1-e^{-c_8n}$, an $m \times n$ matrix
$A$ with i.i.d. $N(0,1)$ entries has the following property: for
every $f \in \mathbf{R}^n$ and every error $e$ with fixed support
$T$ satisfying $|T| \leq \rho m$ and fixed signs on $T$, $f$ is the
unique solution to the $l_p$-minimization problem (\ref{eqn:lp}).
\end{theorem}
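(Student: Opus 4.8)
The plan is to reduce the statement, via the sufficient condition of Theorem~\ref{thm:wb}, to a uniform-over-$z$ inequality for a random Gaussian $A$. Concretely, Theorem~\ref{thm:wb} guarantees that $f$ is the unique $l_p$-minimizer for every $f$ and every admissible $e$ as soon as
$$\sum_{i\in T^-}|(Az)_i|^p<\sum_{i\in T^c}|(Az)_i|^p$$
for all non-zero $z$, where $T^-=T^-(z)=\{i\in T:(Az)_ie_i<0\}$. Since the inequality is invariant under scaling of $z$, I would restrict to $\|z\|_2=1$ and aim to show that the gap $\sum_{i\in T^c}|(Az)_i|^p-\sum_{i\in T^-}|(Az)_i|^p$ stays bounded below by a positive multiple of $m\mu$ (with $\mu=E[|X|^p]$) simultaneously for all such $z$, on an event of probability $1-e^{-c_8n}$.

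Next I would set up the same $\gamma$-net $K$ of the unit sphere used in Lemma~\ref{lemma:slp}, with $|K|\le(1+2/\gamma)^n$. For each fixed $v^k\in K$ the vector $Av^k$ has i.i.d.\ $N(0,1)$ entries, so Lemma~\ref{lemma:rhow} (with the fixed sign pattern of $e$) gives, for small $\epsilon>0$, both $\sum_{i\in T,\,(Av^k)_ie_i<0}|(Av^k)_i|^p<\tfrac12\rho m(\mu+\epsilon)$ and $\sum_{i\in T^c}|(Av^k)_i|^p>(1-\rho)m(\mu-\epsilon)$; the same Chernoff-type concentration also yields the auxiliary upper bounds $\sum_{i\in T}|(Av^k)_i|^p<\rho m(\mu+\epsilon)$ and $\sum_{i\in T^c}|(Av^k)_i|^p<(1-\rho)m(\mu+\epsilon)$. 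A union bound over $K$, with $m\ge c_7n$ for $c_7$ large, makes all of these hold for every net point at once with probability $1-e^{-c_8n}$. This is where the threshold $\tfrac23$ enters: because $\rho<\tfrac23$ we have $\tfrac12\rho<1-\rho$, so for $\epsilon$ small the net-point $T^c$ mass strictly dominates the net-point wrong-sign mass.

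The work is then to promote these net estimates to arbitrary $z$. Writing $z=\sum_{j\ge0}\gamma_jv_j$ with $v_j\in K$, $\gamma_0=1$, $\gamma_j\le\gamma^j$, the lower bound on $\sum_{i\in T^c}|(Az)_i|^p$ follows exactly as in the lower-bound chain of Lemma~\ref{lemma:slp}, restricted to the fixed index set $T^c$, using the $v_0$ lower bound and the $v_j$ upper bounds over $T^c$. The upper bound on $\sum_{i\in T^-(z)}|(Az)_i|^p$ is the delicate step, and I expect it to be the main obstacle, because the summation set $T^-(z)$ depends on $z$ rather than being a fixed subset as in Lemma~\ref{lemma:slp}. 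To handle it I would split $T^-(z)$ into $T^-(z)\cap T^-(v_0)$ and $T^-(z)\setminus T^-(v_0)$. On the first set I use $|(Az)_i|^p\le|(Av_0)_i|^p+|(A(z-v_0))_i|^p$, contributing at most the wrong-sign mass of $v_0$ plus a perturbation term. On the second set $(Az)_i$ and $(Av_0)_i$ disagree in sign, so $|(A(z-v_0))_i|=|(Av_0)_i|+|(Az)_i|\ge|(Az)_i|$ and hence $|(Az)_i|^p\le|(A(z-v_0))_i|^p$; thus every sign-flipped coordinate is absorbed into the perturbation term rather than inflating the main term. Iterating the chain and controlling the perturbation sums by the full-$T$ upper bounds over the net gives $\sum_{i\in T^-(z)}|(Az)_i|^p\le\tfrac12\rho m(\mu+\epsilon)+O(\gamma^p)\,\rho m(\mu+\epsilon)$.

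Finally, choosing $\gamma$ and $\epsilon$ small enough that the perturbation corrections are negligible against the order-$m\mu$ gap secured in the net step, the $T^c$ lower bound strictly exceeds the $T^-(z)$ upper bound for every non-zero $z$ on the event of probability $1-e^{-c_8n}$. Invoking Theorem~\ref{thm:wb} then yields that $f$ is the unique $l_p$-minimizer for every $f$ and every error $e$ with support $T$, $|T|\le\rho m$, and the prescribed signs, which is the assertion.
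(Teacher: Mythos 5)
Your proposal is correct, and its overall architecture coincides with the paper's: reduce to the strict inequality of Theorem \ref{thm:wb}, apply Lemma \ref{lemma:rhow} to each point of a $\gamma$-net, take a union bound with $m\ge c_7n$, chain $z=\|z\|_2\sum_{j\ge0}\gamma_jv_j$, and handle the lower bound over $T^c$ exactly as in Lemma \ref{lemma:slp}. The one place you genuinely diverge is the step you correctly flag as delicate, the upper bound over the $z$-dependent set $T^-(z)$. The paper handles it coordinate-wise: for $i$ with $(Az)_ie_i<0$ it discards from $\sum_{j}\gamma_j(Av_j)_i$ all terms with $(Av_j)_ie_i\ge0$ (this can only increase the magnitude, since the retained terms all lie on the same side as $(Az)_i$), so after regrouping by $j$ each chaining level contributes only its own wrong-sign mass $\gamma^{jp}\cdot\tfrac12\rho m(\mu+\epsilon)$, for a total of $\tfrac{1}{2(1-\gamma^p)}\rho m(\mu+\epsilon)$. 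You instead split on whether $v_0$ already has the wrong sign at coordinate $i$, use the level-zero wrong-sign bound on $T^-(z)\cap T^-(v_0)$, and absorb the sign-flipped coordinates entirely into the tail via $|(Az)_i|\le|(A(z-v_0))_i|$; this forces you to carry the auxiliary full-$T$ net bound $\sum_{i\in T}|(Av^k)_i|^p<\rho m(\mu+\epsilon)$, which the paper never needs, and yields $\tfrac12\rho m(\mu+\epsilon)+\tfrac{\gamma^p}{1-\gamma^p}\rho m(\mu+\epsilon)$. Both corrections are $O(\gamma^p)\cdot m$, so both close against the $T^c$ lower bound exactly when $\tfrac12\rho<1-\rho$, i.e.\ $\rho<\tfrac23$; your route costs one extra concentration event per net point but avoids sign bookkeeping across all chaining levels. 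One small simplification: no genuine ``iteration'' of your split is needed --- a single split at $v_0$ followed by subadditivity of $t\mapsto t^p$ applied to the whole tail $\sum_{j\ge1}\gamma_jv_j$ already gives the stated bound.
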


\begin{proof}
From lemma \ref{lemma:rhow}, applying similar arguments in the proof
of lemma \ref{lemma:slp}, we get that 
when $m \geq c_7 n$ and $n$ is large enough, with probability
$1-e^{-c_8n}$ for some $c_8>0$,
\begin{itemize}
\item $\frac{1}{2}\rho m (\mu-\epsilon)<\sum_{i \in T: (Av)_ie_i
<0} |(Av)_i|^p < \frac{1}{2}\rho m (\mu+\epsilon)$
\item $(1-\rho) m (\mu-\epsilon) < \sum_{i \in T^c} |(Av)_i|^p
< (1-\rho) m (\mu+\epsilon)$
\end{itemize}
hold for all the vectors $v$ in a $\gamma$-net $K$ at the same time.
Moreover, for any $z \in \mathbf{R}^n$, we have
 $z=\|z\|_2\sum_{j\geq 0} \gamma_j v_j$,
where $\gamma_0=1$, $v_j \in K$ for all $j$ and  $\gamma_j \leq
\gamma^j$.

Let $T^-=\{i \in T: (Az)_ie_i <0\}$. For any $i$ in $T^-$,
\begin{eqnarray*}
|(Az)_i|^p&=&\|z\|_2^p \big|(\sum_{j \geq 0} \gamma_j Av_j)_i\big|^p \\
&\leq&  \|z\|_2^p \big|(\sum_{j: (Av_j)_ie_i < 0} \gamma_j Av_j)_i\big|^p \\
&\leq& \|z\|_2^p \sum_{j: (Av_j)_ie_i < 0} \gamma^{jp} |(Av_j)_i|^p
\end{eqnarray*}
where the first inequality holds as $(Az)_ie_i <0$. Then
\begin{eqnarray*}
\sum \limits_{i \in T^-} |(Az)_i|^p&\leq &\|z\|_2^p \sum \limits_{i \in T^-}  \sum \limits_{j: (Av_j)_ie_i < 0} \gamma^{jp} |(Av_j)_i|^p \\
& \leq & \|z\|_2^p \sum \limits_{i \in T}  \sum \limits_{j: (Av_j)_ie_i < 0} \gamma^{jp} |(Av_j)_i|^p \\
& = & \|z\|_2^p  \sum \limits_{j\geq 0}  \gamma^{jp} \sum \limits_{i \in T: (Av_j)_ie_i < 0}  |(Av_j)_i|^p \\
&<& \|z\|_2^p \frac{1}{2(1-\gamma^p)}\rho m(\mu+\epsilon)
\end{eqnarray*}

\begin{eqnarray*}
&&\sum \limits_{i \in T^c} |(Az)_i|^p= \|z\|_2^p \sum \limits_{i \in T^c} |(\sum \limits_{j \geq 0} \gamma_{j} Av_j)_i|^p \\
& \geq & \|z\|_2^p \big(\sum \limits_{i\in T^c} |(Av_0)_i|^p- \sum \limits_{j \geq 1} \gamma^{jp} \sum \limits_{i\in T^c}|(Av_j)_i|^p\big) \\
& > & \|z\|_2^p \big((1-\rho)m(\mu-\epsilon)- \sum \limits_{j \geq 1} \gamma^{jp} (1-\rho)m(\mu+\epsilon)\big) \\
& \geq & \|z\|_2^p (1-\rho)m \frac{\mu-2\mu\gamma^p-\epsilon}{1-\gamma^p}
\end{eqnarray*}

Thus 
$\sum_{i \in T^c} |(Az)_i|^p- \sum _{i \in T^-} |(Az)_i|^p >
\|z\|_2^p
\frac{m\mu}{1-\gamma^p}\big(1-\frac{3}{2}\rho-2\gamma^p(1-\rho)-\frac{\epsilon}{\mu}(1-\frac{\rho}{2})\big)$. %
For any $\rho< \frac{2}{3}$, we can pick $\gamma$ and $\epsilon$
small enough such that the righthand side 
is positive. The result follows by applying Theorem \ref{thm:wb}.

\end{proof}
We remark here that $\rho^*_w$ is a sharp bound for successful
recovery in this setup. For any $\rho>\rho^*_w$, from Lemma
\ref{lemma:rhow}, with overwhelming probability that $\sum_{i \in T:
X_ie_i <0} |X_i|^p> \sum_{i \in T^c} |X_i|^p$, then Theorem
\ref{thm:wb} indicates that the $l_p$-recovery fails for some error vector
$e$ in this case.

Surprisingly, the successful recovery threshold $\rho^*$ when fixing
the support and the signs of an error vector is $\frac{2}{3}$ for
all $p$ in $(0,1)$ and is strictly less than the threshold for
$p=1$, which is 1 (\cite{Donoho06}). Thus in this case,
$l_1$-minimization has better recovery performance than that of
$l_p$-minimization ($p<1$) in terms of the sparsity requirement for
the error vector. The result seems counterintuitive, however, it
largely depends on the definition of successful recovery in terms of
worse case performance. The condition of successful recovery via
$l_1$-minimization from any error vector on the fixed support with
fixed signs is the same, while the condition of $l_p$-minimization
from different error vectors differs.

%

\vspace{0.09in}\noindent {\bf Acknowledgments:} The authors thank
anonymous reviewers for helpful comments. The research is supported
by NSF under CCF-0835706.

\bibliographystyle{IEEEtran}

\end{document}